\newtheorem{defin}{Definition}
\newtheorem{theo}[defin]{Theorem}
 \newenvironment{theorem}{\begin{theo} \sl}{\end{theo}}
\newtheorem{lem}[defin]{Lemma}
 \newenvironment{lemma}{\begin{lem} \sl}{\end{lem}}
\newtheorem{coro}[defin]{Corollary}
 \newenvironment{corollary}{\begin{coro} \sl}{\end{coro}}
\newenvironment{proof}{\emph{Proof.}}{\hfill $\Box$\\}
\newcommand{\etal}{\emph{et~al.}\xspace}
\newcommand{\Graph}[1]{\ensuremath{\theta_{(4 k + #1)}}-Graph\xspace}
\newcommand{\graph}[1]{\ensuremath{\theta_{(4 k + #1)}}-graph\xspace}
\newcommand{\canon}[2]{\ensuremath{T_{#1 #2}}}
\newcommand{\ygraph}{constrained $Y_m$-graph\xspace}
\newcommand{\Vis}{\mathord{\it Vis}}
\newcommand{\const}{\ensuremath{\boldsymbol{c}}\xspace}
\begin{document}

\title{Spanning Properties of Yao and $\Theta$-Graphs in the Presence of Constraints\thanks{Research supported in part by NSERC and Carleton University's President's 2010 Doctoral Fellowship.}~\thanks{Extended abstracts containing results in this paper appeared in LATIN 2014 and CCCG 2014.}}

\author{Prosenjit Bose\thanks{Carleton University, Ottawa, Canada. {\tt jit@scs.carleton.ca}} 
\and Andr\'e van Renssen\thanks{University of Sydney, Sydney, Australia. {\tt andre.vanrenssen@sydney.edu.au}}  }

\date{}

\maketitle

\begin{abstract}
  We present improved upper bounds on the spanning ratio of constrained $\theta$-graphs with at least 6 cones and constrained Yao-graphs with 5 or at least 7 cones. Given a set of points in the plane, a Yao-graph partitions the plane around each vertex into $m$ disjoint cones, each having aperture $\theta = 2 \pi/m$, and adds an edge to the closest vertex in each cone. Constrained Yao-graphs have the additional property that no edge properly intersects any of the given line segment constraints. Constrained $\theta$-graphs are similar to constrained Yao-graphs, but use a different method to determine the closest vertex. 
  
  We present tight bounds on the spanning ratio of a large family of constrained $\theta$-graphs. We show that constrained $\theta$-graphs with $4k + 2$ ($k \geq 1$ and integer) cones have a tight spanning ratio of $1 + 2 \sin(\theta/2)$, where $\theta$ is $2 \pi / (4k + 2)$. We also present improved upper bounds on the spanning ratio of the other families of constrained $\theta$-graphs. These bounds match the current upper bounds in the unconstrained setting. 

  We also show that constrained Yao-graphs with an even number of cones ($m \geq 8$)  have spanning ratio at most $1 / \left( 1 - 2 \sin (\theta/2) \right)$ and constrained Yao-graphs with an odd number of cones ($m \geq 5$) have spanning ratio at most $1 / \left( 1 - 2 \sin (3\theta/8) \right)$. As is the case with constrained $\theta$-graphs, these bounds match the current upper bounds in the unconstrained setting, which implies that like in the unconstrained setting using more cones can make the spanning ratio worse. 
\end{abstract}

\section{Introduction}
A geometric graph $G$ is a weighted graph whose vertices are points in the plane and whose edges are line segments between pairs of points. Every edge is weighted by the Euclidean distance between its endpoints. The distance between two vertices $u$ and $v$ in $G$, denoted by $d_G(u, v)$, is defined as the sum of the weights of the edges along the shortest path between $u$ and $v$ in $G$. A subgraph $H$ of $G$ is a $t$-spanner of $G$ (for $t\geq 1$) if for each pair of vertices $u$ and $v$, $d_H(u, v) \leq t \cdot d_G(u, v)$. The smallest value $t$ for which $H$ is a $t$-spanner is the {\em spanning ratio} or {\em stretch factor}. The graph $G$ is referred to as the {\em underlying graph} of $H$. The spanning properties of various geometric graphs have been studied extensively in the literature (see \cite{BS11,NS-GSN-06} for a comprehensive overview of the topic). We look at two specific types of geometric spanners: Yao-graphs and $\theta$-graphs. 

Introduced independently by Flinchbaugh and Jones~\cite{FJ81} and Yao~\cite{Y82}, Yao-graphs partition the plane around each vertex into $m$ disjoint cones, each having aperture $\theta = 2 \pi/m$. The Yao-graph with $m$ cones (also denoted as the $Y_m$-graph) is constructed in the following way: for each cone of each vertex $u$, connect $u$ to the vertex $v$ that is closest to $u$. However, neither Flinchbaugh and Jones nor Yao proved that these graphs are spanners. To the best of our knowledge, the first such proof was given by Alth{\"o}fer~\etal~\cite{A93}, who proved that for every spanning ratio $t > 1$, there exists an $m$ such that the $Y_m$-graph is a $t$-spanner. It appears that a similar result was already known by that time, since Clarkson~\cite{C87} remarked in 1987 that the $Y_{12}$-graph is a $(1 + \sqrt{3})$-spanner, though without providing a proof or reference. 

In 2004,  Bose~\etal~\cite{BMNSZ04} provided a more precise bound on the spanning ratio. They showed that Yao-graphs with at least 9 cones have spanning ratio at most $1 / (\cos \theta - \sin \theta)$. This was later strengthened to show that Yao-graphs with at least 7 cones are $1 / \left( 1 - 2 \sin (\theta/2) \right)$-spanners~\cite{BDDOSSW12Arxiv}. Recently, Damian and Raudonis~\cite{DR12} showed that the $Y_6$-graph is a $17.64$-spanner, which was later improved to $5.8$~\cite{BBDFKORTVX14}. Bose~\etal~\cite{BDDOSSW12} showed that the $Y_4$-graph has spanning ratio at most $8 \sqrt{2} \cdot (26 + 23 \sqrt{2}) \approx 663$ and Barba~\etal~\cite{BBDFKORTVX14} showed that the $Y_5$-graph is a $\left( 2 + \sqrt{3} \right)$-spanner. In the same paper, they also improved the upper bound on the spanning ratio of Yao-graphs with an odd number of cones to $1 / \left( 1 - 2 \sin (3\theta/8) \right)$. On the other hand, when a Yao-graph has fewer than 4 cones, El~Molla~\cite{E09} showed that there is no constant $t$ such that it is a $t$-spanner. 

Similar to Yao-graphs, $\theta$-graphs also partition the plane around each vertex into $m$ disjoint cones, each having aperture $\theta = 2 \pi/m$. However, unlike in the case of Yao-graphs, the $\theta_m$-graph is constructed by connecting each vertex $u$ to the vertex whose projection along the bisector of the cone is closest to $u$. This construction was introduced independently by Clarkson~\cite{C87} and Keil~\cite{K88}. Ruppert and Seidel~\cite{RS91} showed that the spanning ratio of these graphs is at most $1/(1 - 2 \sin (\theta/2))$, when $\theta < \pi/3$, i.e. there are at least 7 cones. Recent results include a tight spanning ratio of $1 + 2 \sin(\theta/2)$ for $\theta$-graphs with $4k + 2$ cones, where $k \geq 1$ and integer, and improved upper bounds for the other three families of $\theta$-graphs~\cite{BCMRV16}. It was also shown that the $\theta_5$-graph is a spanner with spanning ratio at most $\sqrt{50 + 22 \sqrt{5}} \approx 9.960$~\cite{BMRV2015} and the $\theta_4$-graph is a spanner with spanning ratio at most $(1 + \sqrt{2}) \cdot (\sqrt{2} + 36) \cdot \sqrt{4 + 2 \sqrt{2}} \approx 237$~\cite{BBCRV2013}. Constructions similar to those for Yao-graphs show that $\theta$-graphs with fewer than 4 cones are not spanners. In fact, until recently it was not known that the $\theta_3$-graph is connected~\cite{ABBBKRTV2013}. 

Most of the research for both Yao- and $\theta$-graphs, however, has focused on constructing spanners where the underlying graph is the complete Euclidean geometric graph. We study this problem in a more general setting with the introduction of line segment {\em constraints}. Specifically, let $P$ be a set of points in the plane and let $S$ be a set of line segments between two vertices in $P$, called \emph{constraints}. The set of constraints is planar, i.e. no two constraints intersect properly. Two vertices $u$ and $v$ can see each other if and only if either the line segment $uv$ does not properly intersect any constraint or $u v$ is itself a constraint. If two vertices $u$ and $v$ can see each other, the line segment $uv$ is a \emph{visibility edge}. The \emph{visibility graph} of $P$ with respect to a set of constraints $S$, denoted $\Vis(P,S)$, has $P$ as vertex set and all visibility edges as edge set. In other words, it is the complete graph on $P$ minus all edges that properly intersect one or more constraints in $S$.

This setting has been studied extensively within the context of motion planning amid obstacles. Clarkson~\cite{C87} was one of the first to study this problem and showed how to construct a linear-sized $(1+\epsilon)$-spanner of $\Vis(P,S)$. Subsequently, Das~\cite{D97} showed how to construct a spanner of $\Vis(P,S)$ with constant spanning ratio and constant degree. The Constrained Delaunay Triangulation was shown to be a 2.42-spanner of $\Vis(P,S)$~\cite{BK06}. Recently, it was also shown that the constrained $\theta_6$-graph is a 2-spanner of $\Vis(P,S)$~\cite{BFRV12}. 

In this paper, we generalize the recent results on unconstrained $\theta$-graphs by Bose~\etal~\cite{BCMRV16} to the constrained setting. There are two main obstacles that differentiate this work from previous results. First, the main difficulty with the constrained setting is that induction cannot be applied directly, as the destination need not be visible from the vertex closest to the source (see Figure~\ref{fig:ConvexChain}, where $w$ is not visible from $v_0$, the vertex closest to $u$). Second, when the graph does not have $4k + 2$ cones, the cones do not line up as nicely as in~\cite{BFRV12}, making it more difficult to apply induction. 

We overcome these two difficulties and show that constrained $\theta$-graphs with $4k + 2$ cones have a spanning ratio of at most $1 + 2 \sin(\theta/2)$, where $\theta$ is $2 \pi / (4k + 2)$. Since the lower bounds of the unconstrained $\theta$-graphs carry over to the constrained setting, this shows that this spanning ratio is tight. We also show that constrained $\theta$-graphs with $4k + 4$ cones have a spanning ratio of at most $1 + 2 \sin(\theta/2) / (\cos(\theta/2) - \sin(\theta/2))$, where $\theta$ is $2 \pi / (4k + 4)$. Finally, we show that constrained $\theta$-graphs with $4k + 3$ or $4k + 5$ cones have a spanning ratio of at most $\cos (\theta/4) / (\cos (\theta/2) - \sin (3\theta/4))$, where $\theta$ is $2 \pi / (4k + 3)$ or $2 \pi / (4k + 5)$. 

\begin{table}[ht]
  \begin{center}
    \begin{tabular}{| >{\centering\arraybackslash}m{\dimexpr.14\linewidth-2\tabcolsep} || >{\centering\arraybackslash}m{\dimexpr.3\linewidth-2\tabcolsep} | >{\centering\arraybackslash}m{\dimexpr.3\linewidth-2\tabcolsep} |}
    \hline
    $m$ & $\theta_m$-Graph & $Y_m$-Graph \\ 
    \hline \hline
     4 & ? & \vspace{2ex} ? \\ [2ex]
    \hline
    5 & ? & \vspace{2ex} $\frac{1}{1 - 2 \sin \left( \frac{3\theta}{8} \right)} \approx 10.87$ \\ [2ex]
    \hline
    6 & $1 + 2 \sin \left( \frac{\theta}{2} \right) = 2$~\cite{BFRV12} & \vspace{2ex} ? \\ [2ex]
    \hline
    $4k + 2$ ($k \geq 2$) & $1 + 2 \sin \left( \frac{\theta}{2} \right)$ & \vspace{2ex} $\frac{1}{1 - 2 \sin \left( \frac{\theta}{2} \right)}$ \\ [2ex]
    \hline
    $4k + 3$ ($k \geq 1$) & $\frac{\cos \left( \frac{\theta}{4} \right)}{\cos \left( \frac{\theta}{2} \right) - \sin \left( \frac{3\theta}{4} \right)}$ & \vspace{2ex} $\frac{1}{1 - 2 \sin \left( \frac{3\theta}{8} \right)}$ \\ [2ex] 
    \hline
    $4k + 4$ ($k \geq 1$) & $1 + \frac{2 \sin \left( \frac{\theta}{2} \right)}{\cos \left( \frac{\theta}{2} \right) - \sin \left( \frac{\theta}{2} \right)}$ & \vspace{2ex} $\frac{1}{1 - 2 \sin \left( \frac{\theta}{2} \right)}$ \\ [2ex]
    \hline
    $4k + 5$ ($k \geq 1$) & $\frac{\cos \left( \frac{\theta}{4} \right)}{\cos \left( \frac{\theta}{2} \right) - \sin \left( \frac{3\theta}{4} \right)}$ & \vspace{2ex} $\frac{1}{1 - 2 \sin \left( \frac{3\theta}{8} \right)}$ \\ [2ex] 
    \hline
    \end{tabular}
  \end{center} 
  \caption{An overview of the upper bounds on the spanning ratios of constrained $\theta$-graphs and Yao-graphs}
  \label{tab:UpperBounds}
\end{table}

\begin{table}[h!]
  \begin{center}
    \begin{tabular}{| >{\centering\arraybackslash}m{\dimexpr.14\linewidth-2\tabcolsep} || >{\centering\arraybackslash}m{\dimexpr.48\linewidth-2\tabcolsep} | >{\centering\arraybackslash}m{\dimexpr.36\linewidth-2\tabcolsep} |}
    \hline
    $m$ & $\theta_m$-Graph & $Y_m$-Graph \\ 
    \hline \hline
     4 &  7~\cite{BBCRV2013} & \vspace{2ex} 3.89~\cite{darryl} \\ [2ex]
    \hline
    5 & $\frac{1}{2}(11\sqrt{5} - 17) \approx 3.79$~\cite{BMRV2015} & \vspace{2ex} 2.87~\cite{BBDFKORTVX14} \\ [2ex]
    \hline
    6 &  $1 + 2 \sin \left( \frac{\theta}{2} \right) = 2$~\cite{BCMRV16} & \vspace{2ex} 2~\cite{BBDFKORTVX14} \\ [2ex]
    \hline
    $4k + 2$ ($k \geq 2$) & $1 + 2 \sin \left( \frac{\theta}{2} \right)$~\cite{BCMRV16} & \vspace{2ex} $1 + 2 \sin\left(\frac{\theta}{2}\right)$~\cite{BBDFKORTVX14} \\ [2ex]
    \hline
    $4k + 3$ ($k \geq 1$) & $\frac{3\cos\left(\frac{\theta}{4}\right)+\cos\left(\frac{3\theta}{4}\right)+\sin\left(\frac{\theta}{2}\right)+\sin \theta +\sin\left(\frac{3\theta}{2}\right)}{3\cos\left(\frac{\theta}{2}\right)+\cos\left(\frac{3\theta}{2}\right)}$~\cite{BCMRV16} & \vspace{2ex} $1 + 2 \sin \left( \frac{3\theta}{8} \right) + g(\theta)$~\cite{BBDFKORTVX14} \\ [2ex] 
    \hline
    $4k + 4$ ($k \geq 1$) & $1 + 2 \tan \left( \frac{\theta}{2} \right) + 2 \tan^2 \left( \frac{\theta}{2} \right)$~\cite{BCMRV16} & \vspace{2ex} $1 + 2 \sin \left( \frac{\theta}{2} \right) \left( 1 + \tan \left( \frac{\theta}{2} \right) \right)$~\cite{BBDFKORTVX14} \\ [2ex]
    \hline
    $4k + 5$ ($k \geq 1$) & $f(\theta) + \tan\left(\frac{\theta}{2}\right) + \frac{1}{2}\sec\left(\frac{\theta}{2}\right)\tan\left(\frac{\theta}{2}\right)$~\cite{BCMRV16} & \vspace{2ex} $1 + 2 \sin \left( \frac{3\theta}{8} \right) + 4 \sin \left( \frac{5\theta}{16} \right) \sin \left( \frac{3\theta}{8} \right)$~\cite{BBDFKORTVX14} \\ [2ex] 
    \hline
    \end{tabular}
  \end{center} 
  \caption{An overview of the lower bounds on the spanning ratios of constrained $\theta$-graphs and Yao-graphs, where $f(\theta) = \frac{1}{2} \sqrt{4\sec\left(\frac{\theta}{2}\right) + 7\sec^2\left(\frac{\theta}{2}\right) + 4\sec^3\left(\frac{\theta}{2}\right) + \sec^4\left(\frac{\theta}{2}\right) - 8\cos\left(\frac{\theta}{2}\right) - 4}$ and $g(\theta) =  4 \frac{\left( \sin \left( \frac{13\theta}{16} \right) + \sin \left( \frac{19\theta}{16} \right)\right) \sin \left( \frac{\theta}{16} \right) \sin \left( \frac{3\theta}{8} \right)}{\sin(2\theta)}$}
  \label{tab:LowerBounds}
\end{table}

Furthermore, to the best of our knowledge, Yao-graphs have not been considered in the constrained setting. As such, it is unknown whether they are spanners of $\Vis(P,S)$. In this paper, we set an important first step towards answering this question by showing that constrained Yao-graphs with 5 or at least 7 cones are spanners. In particular, we prove that constrained Yao-graphs with at least 7 cones have spanning ratio at most $1 / \left( 1 - 2 \sin (\theta/2) \right)$. When the constrained Yao-graph has an odd number of cones, we can improve on this result, and extend it to the Yao-graph with 5 cones, and show an upper bound of $1 / \left( 1 - 2 \sin (3\theta/8) \right)$. Surprisingly, these bounds match the current upper bounds in the unconstrained setting. An overview of the upper bounds for constrained $\theta$-graphs and Yao-graphs can be found in Table~\ref{tab:UpperBounds}. 

Finally, since the lower bounds for the unconstrained setting also hold in the constrained setting, this also implies that even in the presence of constraints, using more cones can make the spanning ratio worse. An overview of the current lower bounds for both $\theta$-graphs and Yao-graphs can be found in Table~ \ref{tab:LowerBounds}.

\section{Preliminaries}
We define a \emph{cone} $C$ to be the region in the plane between two rays originating from a vertex referred to as the apex of the cone. When constructing a (constrained) $\theta_m$- or $Y_m$-graph, for each vertex $u$ consider the rays originating from $u$ with the angle between consecutive rays being $\theta = 2 \pi / m$. Each pair of consecutive rays defines a cone. The cones are oriented such that the bisector of one cone coincides with the vertical halfline through $u$ that lies above $u$. Let this cone of $u$ be $C_0$ and number the cones in clockwise order around $u$ (see Figure~\ref{fig:Cones}). The cones around the other vertices have the same orientation as the ones around $u$. We write $C_i^u$ to indicate the $i$-th cone of a vertex $u$. 

\begin{figure}[ht]
  \center
  \begin{minipage}[t]{0.45\textwidth}
    \begin{center}
      \includegraphics{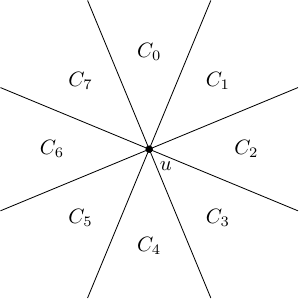}
    \end{center}
    \caption{The cones having apex $u$ in the $\theta_8$- and $Y_8$-graph}
    \label{fig:Cones}
  \end{minipage}
  \hspace{0.05\linewidth}
  \begin{minipage}[t]{0.45\textwidth}
    \begin{center}
      \includegraphics{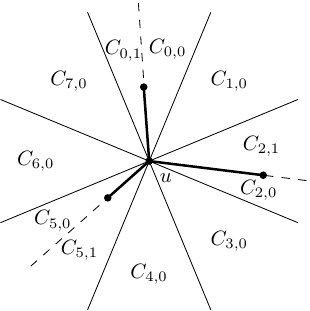}
    \end{center}
    \caption{The subcones having apex $u$ in the constrained $\theta_8$- and $Y_8$-graph. Constraints are shown as thick segments}
    \label{fig:ConstrainedCones}
  \end{minipage}
\end{figure}

Let vertex $u$ be an endpoint of a constraint $c$ and let the other endpoint $v$ lie in cone $C_i^u$. The lines through all such constraints $c$ split $C_i^u$ into several \emph{subcones}. We use $C_{i, j}^u$ to denote the $j$-th subcone of $C_i^u$ (see Figure~\ref{fig:ConstrainedCones}). When a constraint $c = (u, v)$ splits a cone of $u$ into two subcones, we assume that $v$ lies in both of these subcones. We consider a cone that is not split to be a single subcone. For ease of exposition, we only consider point sets in general position: no two points lie on a line parallel to one of the rays that define the cones, no two points lie on a line perpendicular to the bisector of a cone, and no three points are collinear. 

We now introduce the constrained $Y_m$-graph: for each subcone $C_{i, j}$ of each vertex $u$, add an edge from $u$ to the closest vertex in that subcone that can see $u$ (see Figure~\ref{fig:ClosestYao}). When there exist multiple closest vertices in a subcone, we add an edge to only one of them. More formally, we add an edge between two vertices $u$ and $v$ if $v$ can see $u$, $v \in C_{i, j}^u$, and for all points $w \in C_{i, j}^u$ that can see $u$, $|u v| \leq |u w|$, where $|x y|$ denotes the length of the line segment between two points $x$ and $y$ and ties are broken arbitrarily. 

\begin{figure}[ht]
  \center
  \begin{minipage}[t]{0.45\textwidth}
    \begin{center}
      \includegraphics{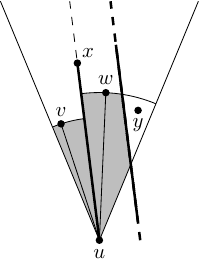}
    \end{center}
    \caption{Vertices $v$ and $w$ are the closest visible vertices to $u$ in the left and right subcone of the constrained Yao-graph, where $u x$ is a constraint}
    \label{fig:ClosestYao}
  \end{minipage}
  \hspace{0.05\linewidth}
  \begin{minipage}[t]{0.45\textwidth}
    \begin{center}
      \includegraphics{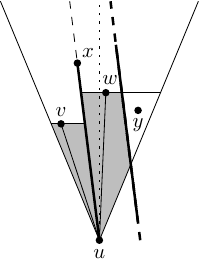}
    \end{center}
    \caption{Vertices $v$ and $w$ are the closest visible vertices to $u$ in the left and right subcone of the constrained $\theta$-graph, where $u x$ is a constraint}
    \label{fig:ClosestTheta}
  \end{minipage}
\end{figure}

The constrained $\theta_m$-graph is similar to the constrained $Y_m$-graph, but uses a different method to determine which vertex is closest to a vertex $u$: for each subcone $C_{i, j}$ of each vertex $u$, add an edge from $u$ to the closest vertex in that subcone that can see $u$, where distance is measured along the bisector of the original cone (\emph{not the subcone}, see Figure~\ref{fig:ClosestTheta}). More formally, we add an edge between two vertices $u$ and $v$ if $v$ can see $u$, $v \in C_{i, j}^u$, and for all points $w \in C_{i, j}^u$ that can see $u$, $|u v'| \leq |u w'|$, where $v'$ and $w'$ denote the projection of $v$ and $w$ on the bisector of $C_i^u$ and $|x y|$ denotes the length of the line segment between two points $x$ and $y$. Note that our assumption of general position implies that each vertex adds at most one edge for each of its subcones. 

Finally, we define the notion of a \emph{canonical triangle} for constrained $\theta$-graphs. Given a vertex $w$ in the cone $C_i$ of vertex $u$, we define the \emph{canonical triangle} \canon{u}{w} to be the triangle defined by the borders of $C_i^u$ and the line through $w$ perpendicular to the bisector of $C_i^u$. Note that subcones do not define canonical triangles. We use $\alpha$ to denote the unsigned angle between $u w$ and the bisector of $C_i^u$ (see Figure~\ref{fig:CanonicalTriangle}). Note that for any pair of vertices $u$ and $w$, there exist two canonical triangles: \canon{u}{w} and \canon{w}{u}. We say that a region is \emph{empty} if it does not contain any vertex of $P$.

 \begin{figure}[ht]
   \begin{center}
     \includegraphics{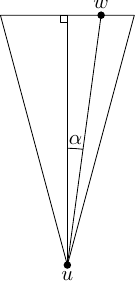}
   \end{center}
   \caption{The canonical triangle \canon{u}{w}}
   \label{fig:CanonicalTriangle}
 \end{figure}

\subsection{Some Useful Lemmas}
In this section, we list a number of lemmas that are used when bounding the spanning ratio of the various graphs. Note that these lemmas are not new, as they are already used in~\cite{BFRV12,BCMRV16}, though some are expanded to work for all four families of constrained $\theta$-graphs. Though the following lemma was applied to constrained $\theta$-graphs in~\cite{BFRV12}, the property holds for any visibility graph. To avoid confusion, we explicitly define a region to be \emph{empty} if it does not contain any vertex of $P$. 

\begin{lemma}
  \label{lem:ConvexChain}
  Let $u$, $v$, and $w$ be three arbitrary points in the plane such that $u w$ and $v w$ are visibility edges and $w$ is not the endpoint of a constraint intersecting the interior of triangle $u v w$. Then there exists a convex chain of visibility edges (different from the chain consisting of $uw$ and $wv$) from $u$ to $v$ in triangle $u v w$, such that the polygon defined by $u w$, $w v$ and the convex chain is empty and does not contain any constraints.
\end{lemma}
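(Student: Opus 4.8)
\medskip
\noindent\emph{Proof sketch (proposal).}
The plan is to exhibit the chain directly instead of peeling it off one vertex at a time. Let $B$ consist of $u$, $v$, and every point of $P$ lying strictly inside triangle $uvw$, and let $C$ be the boundary chain of the convex hull $\mathrm{conv}(B)$ running from $u$ to $v$ on the same side of line $uv$ as $w$; when no point of $P$ lies strictly inside the triangle, $C$ is just the edge $uv$. Since every point of $B$ other than $u$ and $v$ lies strictly on the $w$-side of line $uv$, the complementary part of the hull boundary is the segment $uv$, so $\mathrm{conv}(B)$ is exactly the region squeezed between $C$ and $uv$. Then $C$ is a convex chain from $u$ to $v$, lies in triangle $uvw$, has all its vertices in $P$, and does not have $w$ as a vertex. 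I would then show $C$ is the chain sought, which reduces to three claims: every edge of $C$ is a visibility edge; the polygon $\Pi$ bounded by $uw$, $wv$, $C$ contains no point of $P$ in its interior; and no constraint meets the interior of $\Pi$.

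The crux is a structural fact about any constraint $c$ that meets the interior of triangle $uvw$. Such a $c$ does not properly cross $uw$ or $vw$ (these are visibility edges and the constraints are pairwise non-crossing); by general position no endpoint of $c$ lies on an open side of the triangle; and---this is the only use of the hypothesis on $w$---$c$ does not have $w$ as an endpoint. Hence each endpoint of $c$ is $u$, $v$, a point of $P$ strictly inside the triangle, or a point strictly outside the closed triangle, and (the closed triangle being convex) the part of $c$ inside it is a single sub-segment whose two endpoints are each $u$, $v$, an interior point of $P$, or---where $c$ exits the triangle, necessarily through the open side $uv$---the one point at which $c$ meets $uv$. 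In every case these two points lie in $\mathrm{conv}(B)$, so by convexity the sub-segment lies in $\mathrm{conv}(B)$, that is, on the $uv$-side of $C$, and therefore crosses no edge of $C$. A constraint not meeting the interior of the triangle meets the closed triangle only along its boundary (or not at all), whereas an edge of $C$ touches that boundary only at $u$ or $v$, so it crosses no edge of $C$ either. Thus every edge of $C$ is a constraint or is properly crossed by no constraint, hence a visibility edge.

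For the polygon: $C$ meets $uw$ and $wv$ only at $u$ and $v$, so $\Pi$ is the simple polygon equal to the part of triangle $uvw$ on the $w$-side of $C$, its complement in the triangle being $\mathrm{conv}(B)$. Every point of $P$ in the triangle lies in $B$, hence on $C$ or on its $uv$-side, hence not in the interior of $\Pi$; so $\Pi$ is empty. By the structural fact, the part inside the triangle of any interior-meeting constraint lies on the $uv$-side of $C$, hence off the interior of $\Pi$; constraints confined to $\partial(uvw)$ or lying outside the triangle are off the interior of $\Pi$ as well.

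I expect the main obstacle to be the structural fact, and within it the moment where the hypothesis that $w$ is not an endpoint of an interior-meeting constraint becomes indispensable: a constraint leaving $w$ into the triangle would run from $w$---which lies on the $w$-side of $C$, hence inside $\Pi$---down across $C$, so it would at once meet the interior of $\Pi$ and cross an edge of $C$, breaking two of the three claims together. The remaining ingredients are routine plane convexity (that $\mathrm{conv}(B)$ lies between $C$ and $uv$, that a segment between two of its points stays inside it, that its extreme points are points of $P$) and a few elementary incidence facts (no endpoint of an interior-meeting constraint on an open side of the triangle; each edge of $C$ touching $\partial(uvw)$ only at $u$ or $v$) underlying the structural claim.
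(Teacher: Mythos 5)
The paper does not actually prove this lemma; it is imported verbatim from~\cite{BFRV12}, where the argument is an induction on the number of points inside triangle $uvw$: an empty triangle forces $uv$ to be a visibility edge (a blocking constraint would have to cross $uw$ or $vw$, end inside, or end at $w$), and otherwise one peels off a suitable interior vertex and recurses on two smaller triangles. Your proposal is a correct but genuinely different route: you name the chain outright as the $w$-facing boundary of $\mathrm{conv}(\{u,v\}\cup(P\cap \mathrm{int}\,uvw))$ and reduce everything to the single structural fact that any constraint meeting the interior of the triangle is trapped, inside the triangle, in that hull --- and you invoke the hypothesis on $w$ at exactly the one point where it is needed (to rule out a constraint leaving $w$ into the triangle, which would cross the chain). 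The case analysis of where such a constraint's clipped sub-segment can end (at $u$, at $v$, at an interior point of $P$, or on the open side $uv$) is complete given the paper's general-position assumptions, and a segment inside a convex hull cannot properly cross a hull edge, so the visibility and emptiness claims follow. The one loose end is the degenerate case where no point of $P$ lies inside the triangle and the chain is the single edge $uv$: there your assertion that each chain edge meets $\partial(uvw)$ only at $u$ or $v$ fails, but the conclusion survives because a proper crossing of $uv$ at a relative-interior point would immediately put the constraint into the open triangle, contradicting the structural fact; one sentence patches this. On balance, the inductive proof localizes the visibility argument to empty triangles and needs no hull machinery, while your construction buys an explicit, non-recursive description of the chain and makes transparent why the region between the chain and $w$ is free of both points and constraints.
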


 \begin{figure}[ht]
   \begin{center}
     \includegraphics{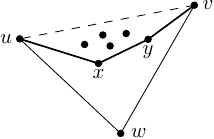}
   \end{center}
   \caption{The convex chain between vertices $u$ and $v$, where thick lines are visibility edges}
   \label{fig:VisiblePointInsideTriangle}
 \end{figure}

Next, we use two lemmas from \cite{BCMRV16} to bound the length of certain line segments. We use $\angle xyz$ to denote the smaller angle between line segments $xy$ and $yz$. 

\begin{lemma}
  \label{lem:ApplyFourPoints} 
  Let $u$, $v$ and $w$ be three vertices in the \graph{x}, $k \geq 1$ and $x \in \{2, 3, 4, 5\}$, such that $w \in C_0^u$ and $v \in \canon{u}{w}$, is to the left of $uw$. Let $a$ be the intersection of the side of $\canon{u}{w}$ opposite $u$ and the left boundary of $C_0^v$. Let $C_i^v$ denote the cone of $v$ that contains $w$ and let $c$ and $d$ be the upper and lower corner of $\canon{v}{w}$. If $1 \leq i \leq k-1$, or $i = k$ and $|c w| \leq |d w|$, then $\max \left\{|v c| + |c w|, |v d| + |d w|\right\} \leq |v a| + |a w|$ and $\max \left\{|c w|, |d w|\right\} \leq |a w|$.
\end{lemma}

\begin{figure}[ht]
  \center
  \begin{minipage}[b]{0.4\textwidth}
    \begin{center}
      \includegraphics{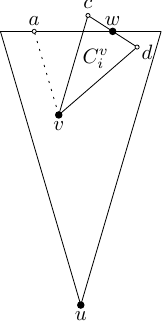}
    \end{center}
    \caption{The situation where we \newline apply Lemma~\ref{lem:ApplyFourPoints}}
    \label{fig:ApplyQuadrilateral}
  \end{minipage}
  \hspace{0.05\linewidth}
  \begin{minipage}[b]{0.5\textwidth}
    \begin{center}
      \includegraphics{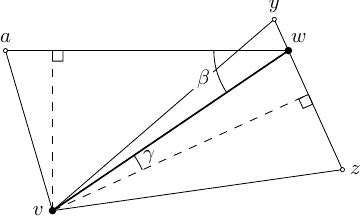}
    \end{center}
    \caption{The situation where we \newline apply Lemma~\ref{lem:CalculationCase}}
    \label{fig:CalculationLemma}
  \end{minipage}
\end{figure}

\begin{lemma}
  \label{lem:CalculationCase}
  Let $u$, $v$ and $w$ be three vertices in the \graph{x}, $x \in \{2, 3, 4, 5\}$, such that $w \in C_0^u$, $v \in \canon{u}{w}$ to the left of $uw$, and $w \not \in C_0^v$. Let $a$ be the intersection of the side of $\canon{u}{w}$ opposite $u$ and the line through $v$ parallel to the left boundary of \canon{u}{w}. Let $y$ and $z$ be the corners of $\canon{v}{w}$ opposite to $v$. Let $\beta = \angle a w v$ and let $\gamma$ be the unsigned angle between $v w$ and the bisector of \canon{v}{w}. Let \const be a positive constant. If $\const \geq \frac{\cos \gamma - \sin \beta}{\cos \left( \frac{\theta}{2} - \beta \right) - \sin \left( \frac{\theta}{2} + \gamma \right)}$, then $|v p| + \const \cdot |p w| \leq |v a| + \const \cdot |a w|$, where $p$ is $y$ if $|y w| \geq |z w|$ and $p$ is $z$ if $|y w| < |z w|$.
\end{lemma}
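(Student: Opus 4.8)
The plan is to express each of the four lengths $|vp|$, $|pw|$, $|va|$, and $|aw|$ as a multiple of $|vw|$ by applying the law of sines in the triangles $vpw$ and $vaw$, and then to check that, after dividing by a positive common factor, the desired inequality is exactly the rearrangement of the hypothesis on $\const$.

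First I would determine the angles of triangle $vpw$. Since $v w$ makes angle $\gamma$ with the bisector of the cone $C_i^v$, the two corners of $\canon{v}{w}$ opposite $v$ (that is, $y$ and $z$) subtend at $v$ the angles $\theta/2 - \gamma$ and $\theta/2 + \gamma$, and $w$ lies on the segment joining them. As $p$ is by definition the corner farther from $w$, it is the one subtending the larger angle, so $\angle p v w = \theta/2 + \gamma$. The side of $\canon{v}{w}$ opposite $v$ is perpendicular to the bisector of $C_i^v$ while $v p$ lies along a boundary of $C_i^v$, so $\angle v p w = \pi/2 - \theta/2$, and hence $\angle v w p = \pi/2 - \gamma$. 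The law of sines then gives
\[
  |v p| = \frac{|v w|\cos\gamma}{\cos(\theta/2)}, \qquad |p w| = \frac{|v w|\sin(\theta/2 + \gamma)}{\cos(\theta/2)}.
\]
Likewise, in triangle $v a w$ the segment $v a$ is parallel to the left boundary of $\canon{u}{w}$ and $a w$ lies along the side of $\canon{u}{w}$ opposite $u$, which is perpendicular to the bisector of $C_0^u$; therefore $\angle v a w = \pi/2 - \theta/2$, and with $\angle a w v = \beta$ we get $\angle a v w = \pi/2 + \theta/2 - \beta$, so the law of sines yields
\[
  |v a| = \frac{|v w|\sin\beta}{\cos(\theta/2)}, \qquad |a w| = \frac{|v w|\cos(\theta/2 - \beta)}{\cos(\theta/2)}.
\]

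Substituting these into the claimed inequality and dividing through by $|v w|/\cos(\theta/2)$, which is positive because $\theta \le \pi/3$, the inequality $|v p| + \const\cdot|p w| \le |v a| + \const\cdot|a w|$ becomes
\[
  \cos\gamma - \sin\beta \;\le\; \const\left(\cos\left(\tfrac{\theta}{2} - \beta\right) - \sin\left(\tfrac{\theta}{2} + \gamma\right)\right).
\]
If $\cos\gamma - \sin\beta \le 0$ this holds because $\const > 0$ and the parenthesised factor is positive; otherwise it follows by multiplying the hypothesis $\const \ge (\cos\gamma - \sin\beta)/(\cos(\theta/2 - \beta) - \sin(\theta/2 + \gamma))$ by that same positive factor.

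The main obstacle is thus to verify that $\cos(\theta/2 - \beta) - \sin(\theta/2 + \gamma) > 0$, which is also what makes the quotient in the hypothesis meaningful. Here I would use the geometric data: $\gamma \le \theta/2$ since $w$ lies in the cone $C_i^v$, so $\sin(\theta/2 + \gamma) \le \sin\theta$; and $w \notin C_0^v$ together with $v \in \canon{u}{w}$ forces $w$ to lie in a cone $C_i^v$ with $i \ge 1$ on the $u w$-side, so that the ray from $v$ through $w$ makes an angle in the open interval $(\theta/2,\pi/2)$ with the bisector direction of $C_0^u$; this gives $0 < \beta < \pi/2 - \theta/2$. A short estimate of $\cos(\theta/2 - \beta)$ on that interval (its infimum over the closure is $\min\{\cos(\theta/2),\sin\theta\}=\sin\theta$, attained only at the endpoints, using $\theta = 2\pi/(4k+x) \le \pi/3$) then shows $\cos(\theta/2 - \beta) > \sin\theta \ge \sin(\theta/2 + \gamma)$, as required. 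Apart from this estimate, the argument is a routine application of the law of sines.
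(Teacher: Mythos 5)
The paper does not actually prove this lemma: it is imported verbatim from~\cite{BRV13}, so there is no in-paper argument to compare against. Judged on its own, your derivation is correct and is surely the intended one. The law-of-sines computations check out: in triangle $vpw$ the angles are $\theta/2+\gamma$ at $v$, $\pi/2-\theta/2$ at $p$ and $\pi/2-\gamma$ at $w$ (your identification of $p$, the farther corner, as the one subtending $\theta/2+\gamma$ is right), and in triangle $vaw$ they are $\pi/2-\theta/2$ at $a$, $\beta$ at $w$ and $\pi/2+\theta/2-\beta$ at $v$; dividing by $|vw|/\cos(\theta/2)$ reduces the claim to exactly the rearranged hypothesis on $\const$. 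You also correctly isolate the one non-mechanical point, the positivity of $\cos(\theta/2-\beta)-\sin(\theta/2+\gamma)$, without which the lemma would be vacuously false. Your bound $0<\beta<\pi/2-\theta/2$ is correct, but the justification is the one place I would ask you to expand: ``$v$ to the left of $uw$'' alone does \emph{not} force $w$ to lie up-and-to-the-right of $v$ (for $\alpha>0$ a point can be left of the line $uw$ yet have larger $x$-coordinate than $w$). What saves the claim is that $w\in C_0^u$ gives $\alpha\le\theta/2$: writing $w-v=(p,q)$ with $q>0$, the condition ``left of $uw$'' gives $p>-q\tan\alpha\ge-q\tan(\theta/2)$, so $w$ cannot lie in a counterclockwise cone of $v$, and together with $w\notin C_0^v$ this pins the direction $v\to w$ into the clockwise range $(\theta/2,\pi/2)$ from vertical, whence $\beta\in(0,\pi/2-\theta/2)$. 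With that made explicit, the endpoint analysis $\cos(\theta/2-\beta)>\min\{\cos(\theta/2),\sin\theta\}=\sin\theta\ge\sin(\theta/2+\gamma)$ (using $\theta\le\pi/3$) closes the argument, and the final case split on the sign of $\cos\gamma-\sin\beta$ is then not even needed.
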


\section[Constrained $\theta$-Graphs]{Constrained $\boldsymbol{\theta}$-Graphs}
In this section, we provide tight bounds on the spanning ratio for the constrained \graph{2} and upper bounds on those for the constrained \graph{3}, the constrained \graph{4}, the constrained \graph{5}. For the latter three families, we provide a generic framework for the upper bound on the spanning ratio, to avoid having to prove the same statements for each of the families individually.

\subsection[The Constrained \Graph{2}]{The Constrained $\boldsymbol{\theta_{(4 k + 2)}}$-Graph}
In this section we prove that the constrained \graph{2} has spanning ratio at most $1 + 2 \cdot \sin (\theta/2)$. Since this is also a lower bound~\cite{BCMRV16}, this proves that this spanning ratio is tight. 

\begin{theorem}
\label{theo:PathLength4k+2Constrained}
  Let $u$ and $w$ be two vertices in the plane such that $u$ can see $w$. Let $m$ be the midpoint of the side of \canon{u}{w} opposing $u$ and let $\alpha$ be the unsigned angle between $u w$ and $u m$. There exists a path connecting $u$ and $w$ in the constrained \graph{2} of length at most \[\left( \left(\frac{1 + \sin \left(\frac{\theta}{2}\right)}{\cos \left(\frac{\theta}{2}\right)} \right) \cdot \cos \alpha + \sin \alpha \right) \cdot |u w|.\] 
\end{theorem}
\begin{proof}
  We assume without loss of generality that $w \in C_0^u$. We prove the theorem by induction on the area of $\canon{u}{w}$. Formally, we perform induction on the rank, when ordered by area, of the triangles \canon{x}{y} for all pairs of vertices $x$ and $y$ that can see each other. Let $a$ and $b$ be the upper left and right corner of $\canon{u}{w}$, and let $A$ and $B$ be the triangles $u a w$ and $u b w$ (see Figure~\ref{fig:ConvexChain}).

  \begin{figure}[ht]
    \begin{center}
      \includegraphics{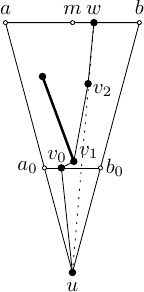}
    \end{center}
    \caption{A convex chain from $v_0$ to $w$}
    \label{fig:ConvexChain}
  \end{figure}

  Our inductive hypothesis is the following, where $\delta(u,w)$ denotes the length of the shortest path from $u$ to $w$ in the constrained \graph{2}:
  \begin{itemize}
    \item If $A$ is empty, then $\delta(u, w) \leq |u b| + |b w|$.
    \item If $B$ is empty, then $\delta(u, w) \leq |u a| + |a w|$.
    \item If neither $A$ nor $B$ is empty, then $\delta(u, w) \leq \max\{|u a| + |a w|, |u b| + |b w|\}$. 
  \end{itemize}

  We first show that this induction hypothesis implies the theorem: $|u m| = |u w| \cdot \cos \alpha$, $|m w| = |u w| \cdot \sin \alpha$, $|a m| = |b m| = |u w| \cdot \cos \alpha \cdot \tan (\theta/2)$, and $|u a| = |u b| = |u w| \cdot \cos \alpha / \cos (\theta/2)$. Thus the induction hypothesis gives that \[\delta(u, w) \leq|u a| + |a m| + |m w| = \left( \left(\frac{1 + \sin \left(\frac{\theta}{2}\right)}{\cos \left(\frac{\theta}{2}\right)} \right) \cdot \cos \alpha + \sin \alpha \right) \cdot |u w|.\] 

We now return our attention to proving that the induction hypothesis holds. 

  \textbf{Base case:} $\canon{u}{w}$ has rank 1. Since the triangle is a smallest triangle such that $u$ and $w$ can see each other, $w$ is the closest visible vertex to $u$ in that cone. Hence the edge $u w$ is part of the constrained \graph{2}, and $\delta(u, w) = |u w|$. From the triangle inequality, we have $|u w| \leq \min\{|u a| + |a w|, |u b| + |b w|\}$, so the induction hypothesis holds.

  \textbf{Induction step:} We assume that the induction hypothesis holds for all pairs of vertices that can see each other and have a canonical triangle whose area is smaller than the area of $\canon{u}{w}$. 

  If $u w$ is an edge in the constrained \graph{2}, the induction hypothesis follows by the same argument as in the base case. If there is no edge between $u$ and $w$, let $v_0$ be the closest visible vertex to $u$ in the subcone of $u$ that contains $w$, and let $a_0$ and $b_0$ be the upper left and right corner of $\canon{u}{v_0}$ (see Figure~\ref{fig:ConvexChain}). By definition, $\delta(u, w) \leq |u v_0| + \delta(v_0, w)$, and by the triangle inequality, $|u v_0| \leq \min\{|u a_0| + |a_0 v_0|, |u b_0| + |b_0 v_0|\}$. We assume without loss of generality that $v_0$ lies to the left of $u w$, which means that $A$ is not empty.

  Since $u w$ and $u v_0$ are visibility edges, by applying Lemma~\ref{lem:ConvexChain} to triangle $v_0 u w$,  a convex chain $v_0, ..., v_l = w$ of visibility edges connecting $v_0$ and $w$ exists (see Figure~\ref{fig:ConvexChain}). Note that, since $v_0$ is the closest visible vertex to $u$, every vertex along the convex chain lies above the horizontal line through $v_0$. 
      
  We now look at two consecutive vertices $v_{j-1}$ and $v_j$ along the convex chain. There are four types of configurations (see Figure~\ref{fig:Configurations}): \mbox{(i) $v_j \in C_k^{v_{j-1}}$,} \mbox{(ii) $v_j \in C_i^{v_{j-1}}$} where \mbox{$1 \leq i < k$,} (iii) $v_j \in C_0^{v_{j-1}}$ and $v_j$ lies to the right of or has the same $x$-coordinate as $v_{j-1}$, and (iv) $v_j \in C_0^{v_{j-1}}$ and $v_j$ lies to the left of $v_{j-1}$. By convexity, the direction of $\overrightarrow{v_j v_{j+1}}$ is rotating counterclockwise for increasing $j$. Thus, these configurations occur in the order Type (i), Type (ii), Type (iii), and Type (iv) along the convex chain from $v_0$ to $w$. We bound $\delta(v_{j-1}, v_j)$ as follows:

  \textbf{Type (i):} If $v_j \in C_k^{v_{j-1}}$, let $a_j$ and $b_j$ be the upper and lower left corners of \canon{v_j}{v_{j-1}} and let $B_j = v_{j-1} b_j v_j$. Note that since $v_j \in C_k^{v_{j-1}}$, $a_j$ is also the intersection of the left boundary of $C_0^{v_{j-1}}$ and the horizontal line through $v_j$. We note that triangle $B_j$ is contained in the area defined by the convex chain, $u v_0$, and $u w$, since all three vertices of $B_j$ lie to the left of $u w$, below the line through $v_{j-1} v_j$, and to the right of the line through $u v_{j-1}$. Hence, triangle $B_j$ must be empty. Since $v_j$ can see $v_{j-1}$ and \canon{v_j}{v_{j-1}} has smaller area than \canon{u}{w}, the induction hypothesis gives that $\delta(v_{j-1}, v_j)$ is at most $|v_{j-1} a_j| + |a_j v_j|$.

  \textbf{Type (ii):} If $v_j \in C_i^{v_{j-1}}$ where $1 \leq i < k$, let $c$ and $d$ be the upper and lower right corner of \canon{v_{j-1}}{v_j}. Let $a_j$ be the intersection of the left boundary of $C_0^{v_{j-1}}$ and the horizontal line through $v_j$. Since $v_j$ can see $v_{j-1}$ and \canon{v_{j-1}}{v_j} has smaller area than \canon{u}{w}, the induction hypothesis gives that $\delta(v_{j-1}, v_j)$ is at most $\max\{|v_{j-1} c| + |c v_j|, |v_{j-1} d| + |d v_j|\}$. Since $v_j \in C_i^{v_{j-1}}$ where $1 \leq i < k$, we can apply Lemma~\ref{lem:ApplyFourPoints} (where $v$, $w$, and $a$ from Lemma~\ref{lem:ApplyFourPoints} are $v_{j-1}$, $v_j$, and $a_j$), which gives us that $\max\{|v_{j-1} c| + |c v_j|, |v_{j-1} d| + |d v_j|\} \leq |v_{j-1} a_j| + |a_j v_j|$. 
 
  \vspace{-1em}
  \begin{figure}[ht]
    \begin{center}
      \includegraphics{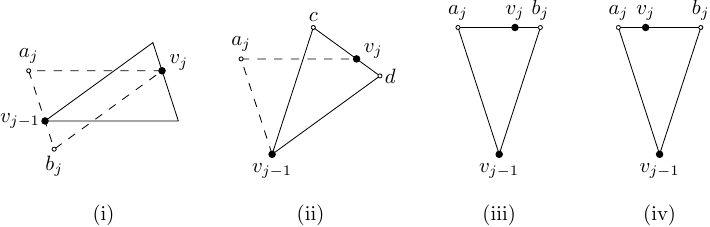}
    \end{center}
    \vspace{-1em}
    \caption{The four types of configurations}
    \label{fig:Configurations}
  \end{figure}
  \vspace{-1em}
 
  \textbf{Type (iii):} If $v_j \in C_0^{v_{j-1}}$ and $v_j$ lies to the right of or has the same $x$-coordinate as $v_{j-1}$, let $a_j$ and $b_j$ be the left and right corners of \canon{v_{j-1}}{v_j} and let $A_j = v_{j-1} a_j v_j$ and $B_j = v_{j-1} b_j v_j$. Since $v_j$ can see $v_{j-1}$ and \canon{v_{j-1}}{v_j} has smaller area than \canon{u}{w}, we can apply the induction hypothesis. Regardless of whether $A_j$ and $B_j$ are empty or not, $\delta(v_{j-1}, v_j)$ is at most $\max\{|v_{j-1} a_j| + |a_j v_j|, |v_{j-1} b_j| + |b_j v_j|\}$. Since $v_j$ lies to the right of or has the same $x$-coordinate as $v_{j-1}$, we know that $|v_{j-1} a_j| + |a_j v_j| \geq |v_{j-1} b_j| + |b_j v_j|$, so $\delta(v_{j-1}, v_j)$ is at most $|v_{j-1} a_j| + |a_j v_j|$.

  \textbf{Type (iv):} If $v_j \in C_0^{v_{j-1}}$ and $v_j$ lies to the left of $v_{j-1}$, let $a_j$ and $b_j$ be the left and right corners of \canon{v_{j-1}}{v_j} and let $A_j = v_{j-1} a_j v_j$ and $B_j = v_{j-1} b_j v_j$. Since $v_j$ can see $v_{j-1}$ and \canon{v_{j-1}}{v_j} has smaller area than \canon{u}{w}, we can apply the induction hypothesis. Thus, if $B_j$ is empty, $\delta(v_{j-1}, v_j)$ is at most $|v_{j-1} a_j| + |a_j v_j|$ and if $B_j$ is not empty, $\delta(v_{j-1}, v_j)$ is at most $|v_{j-1} b_j| + |b_j v_j|$.

  Now that we have bounded the length of the inductive path for each type of configuration, we use these configurations to bound the total length of the path. We consider three cases: (a) $\angle a w u \leq \pi/2$, (b) $\angle a w u > \pi/2$ and $B$ is empty, and (c) $\angle a w u > \pi/2$ and $B$ is not empty. 

  \textbf{Case (a):} If $\angle a w u \leq \pi/2$, the convex chain cannot contain any Type (iv) configurations: for Type~(iv) configurations to occur, $v_j$ needs to lie to the left of $v_{j-1}$. However, by construction, $v_j$ lies on or to the right of the line through $v_{j-1}$ and $w$. Hence, since $\angle a w v_{j-1} < \angle a w u \leq \pi/2$, $v_j$ lies to the right of or has the same $x$-coordinate as $v_{j-1}$. We can now bound $\delta(u, w)$ by using these bounds: 
  \begin{eqnarray*}
    \delta(u, w) &\leq& |u v_0| + \sum_{j=1}^l \delta(v_{j-1}, v_j) \\
		 &\leq& |u a_0| + |a_0 v_0| + \sum_{j=1}^l (|v_{j-1} a_j| + |a_j v_j|) \\
		 &=& |u a| + |a w|.
  \end{eqnarray*}

  \textbf{Case (b):} If $\angle a w u > \pi/2$ and $B$ is empty, the convex chain can contain Type~(iv) configurations. However, since $B$ is empty and the area between the convex chain and $u w$ is empty (by Lemma~\ref{lem:ConvexChain}), all $B_j$ are also empty. Using the computed bounds on the lengths of the paths between the points along the convex chain, we can bound $\delta(u, w)$ as in the previous case.

  \textbf{Case (c):} If $\angle a w u > \pi/2$ and $B$ is not empty, the convex chain can contain Type~(iv) configurations and since $B$ is not empty, the triangles $B_j$ need not be empty. Recall that $v_0$ lies in $A$, hence neither $A$ nor $B$ is empty. Therefore, it suffices to prove that $\delta(u, w) \leq \max\{|ua| + |aw|, |ub| + |bw|\} = |ub| + |bw|$. Let \canon{v_{j'}}{v_{j'+1}} be the first Type (iv) configuration along the convex chain (if it has any), let $a'$ and $b'$ be the upper left and right corner of \canon{u}{v_{j'}}, and let $b''$ be the upper right corner of \canon{v_{j'}}{w}. We can bound $\delta(u, w)$ as follows (see Figure~\ref{fig:SpanningProofCase3-4k+2}):

  \begin{figure}[ht]
    \begin{center}
      \includegraphics{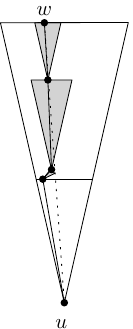}
      \hspace{0.1em}
      \includegraphics{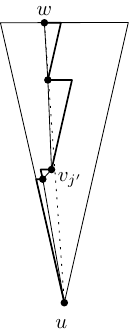}
      \hspace{0.1em}
      \includegraphics{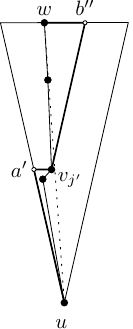}
      \hspace{0.1em}
      \includegraphics{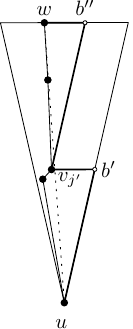}
      \hspace{0.1em}
      \includegraphics{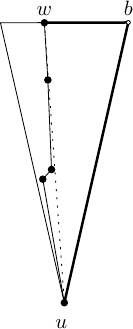}
    \end{center}
    \caption{Visualization of the paths (thick lines) in the inequalities of case (c)}
    \label{fig:SpanningProofCase3-4k+2}
  \end{figure}  
  
  \begin{eqnarray*}
    \delta(u, w) &\leq& |u v_0| + \sum_{j=1}^l \delta(v_{j-1}, v_j) \\
		 &\leq& |u a_0| + |a_0 v_0| + \sum_{j=1}^{j'} (|v_{j-1} a_j| + |a_j v_j|) + \sum_{j=j'+1}^l (|v_{j-1} b_j| + |b_j v_j|) \\
		 &=& |u a'| + |a' v_{j'}| + |v_{j'} b''| + |b'' w| \\
		 &\leq& |u b'| + |b' v_{j'}| + |v_{j'} b''| + |b'' w| \\
		 &=& |u b| + |b w|. 
  \end{eqnarray*} 
\end{proof}

Since $\left( \left(1 + \sin (\theta/2) \right)/\cos (\theta/2) \right) \cdot \cos \alpha + \sin \alpha$ is increasing in $\alpha$, for $\alpha \in [0, \theta/2]$ and fixed $\theta \in [0, \pi/3]$, it is maximized when $\alpha = \theta/2$, and we obtain the following corollary: 

\begin{corollary}
  \label{cor:SpanningRatioConstrained}
  The constrained \graph{2} is a $\left( 1 + 2 \cdot \sin \left( \frac{\theta}{2} \right) \right)$-spanner of $\Vis(P,S)$. 
\end{corollary}

\subsection{Generic Framework for the Spanning Proof}
Next, we modify the spanning proof from the previous section and provide a generic framework for the spanning proof for the other three families of $\theta$-graphs. After providing this framework, we complete the proofs for the individual families. 

The general inductive approach used in this framework is similar to that used in the proof of Theorem~\ref{theo:PathLength4k+2Constrained}. However, since for these three remaining families the line perpendicular to the bisector of the cone is not parallel to a cone boundary, the induction hypothesis  needs to be modified. While this modification does not preserve the tightness of the bound on the spanning ratio, it does allow us to make the proof more generic, hence leading to the framework that works for all three families. 

\begin{theorem}
\label{theo:PathLengthGenericConstrained}
  Let $u$ and $w$ be two vertices in the plane such that $u$ can see $w$. Let $m$ be the midpoint of the side of \canon{u}{w} opposing $u$ and let $\alpha$ be the unsigned angle between $u w$ and $u m$. There exists a path connecting $u$ and $w$ in the constrained \graph{x} of length at most \[\left( \frac{\cos \alpha}{\cos \left(\frac{\theta}{2}\right)} + \const \cdot \left(\cos \alpha \cdot \tan \left(\frac{\theta}{2}\right) + \sin \alpha\right) \right) \cdot |u w|,\] where $\const \geq 1$ is a function that depends on $x \in \{3, 4, 5\}$ and $\theta$. For the \graph{4}, \const is at most $1 / (\cos (\theta/2) - \sin (\theta/2))$ and for the \graph{3} and \graph{5}, \const is at most $\cos (\theta/4) /$ $(\cos (\theta/2) - \sin (3\theta/4))$.
\end{theorem}
\begin{proof} 
  We prove the theorem by induction on the area of $\canon{u}{w}$. Formally, we perform induction on the rank, when ordered by area, of the triangles \canon{x}{y} for all pairs of vertices $x$ and $y$ that can see each other. We assume without loss of generality that $w \in C_0^u$. Let $a$ and $b$ be the upper left and right corner of $\canon{u}{w}$ (see Figure~\ref{fig:ConvexChain}).

  Our inductive hypothesis is the following, where $\delta(u,w)$ denotes the length of the shortest path from $u$ to $w$ in the constrained \graph{x}: $\delta(u, w) \leq \max\{|u a| + \const \cdot |a w|, |u b| + \const \cdot |b w|\}$. 

  We first show that this induction hypothesis implies the theorem: $|u m| = |u w| \cdot \cos \alpha$, $|m w| = |u w| \cdot \sin \alpha$, $|a m| = |b m| = |u w| \cdot \cos \alpha \cdot \tan (\theta/2)$, and $|u a| = |u b| = |u w| \cdot \cos \alpha / \cos (\theta/2)$. Thus the induction hypothesis gives that \[\delta(u, w) \leq |u a| + \const \cdot (|a m| + |m w|) = \left( \frac{\cos \alpha}{\cos \left(\frac{\theta}{2}\right)} + \const \cdot \left(\cos \alpha \tan \left(\frac{\theta}{2}\right) + \sin \alpha\right) \right) \cdot |u w|.\]

We now return our attention to proving that the induction hypothesis holds. 

  \textbf{Base case:} $\canon{u}{w}$ has rank 1. Since the triangle is a smallest triangle such that $u$ and $w$ can see each other, $w$ is the closest visible vertex to $u$ in that cone. Hence the edge $u w$ is part of the constrained \graph{x}, and $\delta(u, w) = |u w|$. From the triangle inequality and the fact that $\const \geq 1$, we have $|u w| \leq \min\{|u a| + \const \cdot |a w|, |u b| + \const \cdot |b w|\}$, so the induction hypothesis holds.

  \textbf{Induction step:} We assume that the induction hypothesis holds for all pairs of vertices that can see each other and have a canonical triangle whose area is smaller than the area of $\canon{u}{w}$. 

  If $u w$ is an edge in the constrained \graph{x}, the induction hypothesis follows by the same argument as in the base case. If there is no edge between $u$ and $w$, let $v_0$ be the closest visible vertex to $u$ in the subcone of $u$ that contains $w$, and let $a_0$ and $b_0$ be the upper left and right corner of $\canon{u}{v_0}$ (see Figure~\ref{fig:ConvexChain}). By definition, $\delta(u, w) \leq |u v_0| + \delta(v_0, w)$, and by the triangle inequality, $|u v_0| \leq \min\{|u a_0| + |a_0 v_0|, |u b_0| + |b_0 v_0|\}$. We assume without loss of generality that $v_0$ lies to the left of $u w$.

  Since $u w$ and $u v_0$ are visibility edges, by applying Lemma~\ref{lem:ConvexChain} to triangle $v_0 u w$,   a convex chain $v_0, ..., v_l = w$ of visibility edges connecting $v_0$ and $w$ exists (see Figure~\ref{fig:ConvexChain}). Note that, since $v_0$ is the closest visible vertex to $u$, every vertex along the convex chain lies above the horizontal line through $v_0$. 

  We now look at two consecutive vertices $v_{j-1}$ and $v_j$ along the convex chain. When $v_j \not \in C_0^{v_{j-1}}$, let $c$ and $d$ be the left and right corners of \canon{v_{j-1}}{v_j}. We distinguish four types of configurations: (i) $v_j \in C_i^{v_{j-1}}$ where $i > k$, or $i = k$ and $|c w| > |d w|$, (ii) $v_j \in C_i^{v_{j-1}}$ where $1 \leq i \leq k-1$, or $i = k$ and $|c w| \leq |d w|$, (iii) $v_j \in C_0^{v_{j-1}}$ and $v_j$ lies to the right of or has the same $x$-coordinate as $v_{j-1}$, and (iv) $v_j \in C_0^{v_{j-1}}$ and $v_j$ lies to the left of $v_{j-1}$. By convexity, the direction of $\overrightarrow{v_j v_{j+1}}$ is rotating counterclockwise for increasing $j$. Thus, these configurations occur in the order Type (i), Type (ii), Type (iii), Type (iv) along the convex chain from $v_0$ to $w$. We bound $\delta(v_{j-1}, v_j)$ as follows:

  \textbf{Type (i):} $v_j \in C_i^{v_{j-1}}$ where $i > k$, or $i = k$ and $|c w| > |d w|$. Since $v_j$ can see $v_{j-1}$ and \canon{v_j}{v_{j-1}} has smaller area than \canon{u}{w}, the induction hypothesis gives that $\delta(v_{j-1}, v_j)$ is at most $\max\{|v_{j-1} c| + \const \cdot |c v_j|, |v_{j-1} d| + \const \cdot |d v_j|\}$. 

  Let $a_j$ is the intersection of the horizontal line through $v_j$ and the left boundary of $C_0^{v_{j-1}}$. We aim to show that $\max\{|v_{j-1} c| + \const \cdot |c v_j|, |v_{j-1} d| + \const \cdot |d v_j|\} \leq |v_{j-1} a_j| + \const \cdot |a_j v_j|$. We use Lemma~\ref{lem:CalculationCase} to do this. However, since the precise application of this lemma depends on the family of $\theta$-graphs and determines the value of \const, this case is discussed in the spanning proofs of the three families. 

  \textbf{Type (ii):} $v_j \in C_i^{v_{j-1}}$ where $1 \leq i \leq k-1$, or $i = k$ and $|c w| \leq |d w|$. Since $v_j$ can see $v_{j-1}$ and \canon{v_j}{v_{j-1}} has smaller area than \canon{u}{w}, the induction hypothesis gives that $\delta(v_{j-1}, v_j)$ is at most $\max\{|v_{j-1} c| + \const \cdot |c v_j|, |v_{j-1} d| + \const \cdot |d v_j|\}$. 

  Let $a_j$ be the intersection of the left boundary of $C_0^{v_{j-1}}$ and the horizontal line through $v_j$. Since $v_j \in C_i^{v_{j-1}}$ where $1 \leq i \leq k-1$, or $i = k$ and $|c w| \leq |d w|$, we can apply Lemma~\ref{lem:ApplyFourPoints} in this case (where $v$, $w$, and $a$ from Lemma~\ref{lem:ApplyFourPoints} are $v_{j-1}$, $v_j$, and $a_j$) and we get that $\max\{|v_{j-1} c| + |c v_j|, |v_{j-1} d| + |d v_j|\} \leq |v_{j-1} a_j| + |a_j v_j|$ and $\max\{|c v_j|, |d v_j|\} \leq |a_j v_j|$. Since $\const \geq 1$, this implies that $\max\{|v_{j-1} c| + \const \cdot |c v_j|, |v_{j-1} d| + \const \cdot |d v_j|\} \leq |v_{j-1} a_j| + \const \cdot |a_j v_j|$. 

  \textbf{Type (iii):} If $v_j \in C_0^{v_{j-1}}$ and $v_j$ lies to the right of or has the same $x$-coordinate as $v_{j-1}$, let $a_j$ and $b_j$ be the left and right corner of \canon{v_{j-1}}{v_j}. Since $v_j$ can see $v_{j-1}$ and \canon{v_{j-1}}{v_j} has smaller area than \canon{u}{w}, we can apply the induction hypothesis. Thus, since $v_j$ lies to the right of or has the same $x$-coordinate as $v_{j-1}$, $\delta(v_{j-1}, v_j)$ is at most $|v_{j-1} a_j| + \const \cdot |a_j v_j|$.

  \textbf{Type (iv):} If $v_j \in C_0^{v_{j-1}}$ and $v_j$ lies to the left of $v_{j-1}$, let $a_j$ and $b_j$ be the left and right corner of \canon{v_{j-1}}{v_j}. Since $v_j$ can see $v_{j-1}$ and \canon{v_{j-1}}{v_j} has smaller area than \canon{u}{w}, we can apply the induction hypothesis. Thus, since $v_j$ lies to the left of $v_{j-1}$, $\delta(v_{j-1}, v_j)$ is at most $|v_{j-1} b_j| + \const \cdot |b_j v_j|$.

  Now that we have bounded the length of the inductive path for each type of configuration, we use these configurations to bound the total length of the path. We consider two cases: (a) $\angle a w u \leq \pi/2$, and (b) $\angle a w u > \pi/2$. 

  \textbf{Case (a):} We need to prove that $\delta(u, w) \leq \max\{|ua| + |aw|, |ub| + |bw|\} = |ua| + |aw|$. We first show that the convex chain cannot contain any Type (iv) configurations: for Type~(iv) configurations to occur, $v_j$ needs to lie to the left of $v_{j-1}$. However, by construction, $v_j$ lies on or to the right of the line through $v_{j-1}$ and $w$. Hence, since $\angle a w v_{j-1} < \angle a w u \leq \pi/2$, $v_j$ lies to the right of $v_{j-1}$. We can now bound $\delta(u, w)$ by using these bounds: 
  \begin{eqnarray*}
    \delta(u, w) &\leq& |u v_0| + \sum_{j=1}^l \delta(v_{j-1}, v_j) \\
		 &\leq& |u a_0| + |a_0 v_0| + \sum_{j=1}^l (|v_{j-1} a_j| + \const \cdot |a_j v_j|) \\
		 &\leq& |u a| + \const \cdot |a w|.
  \end{eqnarray*}

  \textbf{Case (b):} If $\angle a w u > \pi/2$, the convex chain can contain Type (iv) configurations. We need to prove that $\delta(u, w) \leq \max\{|ua| + |aw|, |ub| + |bw|\} = |ub| + |bw|$. Let \canon{v_{j'}}{v_{j'+1}} be the first Type (iv) configuration along the convex chain (if it has any), let $a'$ and $b'$ be the upper left and right corner of \canon{u}{v_{j'}}, and let $b''$ be the upper right corner of \canon{v_{j'}}{w}. We now bound $\delta(u, w)$ as follows (see Figure~\ref{fig:SpanningProofCase3-4k+2}): 
  \begin{eqnarray*}
    \delta(u, w) &\leq& |u v_0| + \sum_{j=1}^l \delta(v_{j-1}, v_j) \\
		 &\leq& |u a_0| + |a_0 v_0| + \sum_{j=1}^{j'} (|v_{j-1} a_j| + \const \cdot |a_j v_j|) + \sum_{j=j'+1}^l (|v_{j-1} b_j| + \const \cdot |b_j v_j|) \\
		 &\leq& |u a'| + \const \cdot |a' v_{j'}| + |v_{j'} b''| + \const \cdot |b'' w| \\
		 &\leq& |u b'| + \const \cdot |b' v_{j'}| + |v_{j'} b''| + \const \cdot |b'' w| \\
		 &=& |u b| + \const \cdot |b w|. 
  \end{eqnarray*}
  
  Note that it remains to prove Case~(i) for the three families of $\theta$-graphs, with their appropriate values of \const. Cases~(ii)-(iv), on the other hand, required only that $\const \geq 1$ and could therefore be handled for all three families at the same time. 
\end{proof}

\subsection[The Constrained \Graph{4}]{The Constrained $\boldsymbol{\theta_{(4 k + 4)}}$-Graph}
In this section we complete the proof of Theorem~\ref{theo:PathLengthGenericConstrained} for the constrained \graph{4}. 

\begin{theorem}
\label{theo:PathLength4k+4Constrained}
  Let $u$ and $w$ be two vertices in the plane such that $u$ can see $w$. Let $m$ be the midpoint of the side of \canon{u}{w} opposite $u$ and let $\alpha$ be the unsigned angle between $u w$ and $u m$. There exists a path connecting $u$ and $w$ in the constrained \graph{4} of length at most 
  \[\left( \frac{\cos \alpha}{\cos \left(\frac{\theta}{2}\right)} + \frac{\cos \alpha \cdot \tan \left(\frac{\theta}{2}\right) + \sin \alpha}{\cos \left(\frac{\theta}{2}\right) - \sin \left(\frac{\theta}{2}\right)} \right) \cdot |u w|.\] 
\end{theorem}
\begin{proof} 
  We apply Theorem~\ref{theo:PathLengthGenericConstrained} using $\const = 1 / \left(\cos (\theta/2) - \sin (\theta/2) \right)$. The assumptions made in Theorem~\ref{theo:PathLengthGenericConstrained} still apply. Recall that $c$ and $d$ are the left and right corners of \canon{v_{j-1}}{v_j}, opposite to $v$, and $a_j$ is the intersection of the horizontal line through $v_j$ and the left boundary of $C_0^{v_{j-1}}$. It remains to show that for the Type (i) configurations, we have that $\max\{|v_{j-1} c| + \const \cdot |c v_j|, |v_{j-1} d| + \const \cdot |d v_j|\} \leq |v_{j-1} a_j| + \const \cdot |a_j v_j|$. Let $\beta$ be $\angle a_j v_j v_{j-1}$ and let $\gamma$ be the angle between $v_j v_{j-1}$ and the bisector of \canon{v_{j-1}}{v_j}. 

  \begin{figure}[ht]
    \begin{center}
      \includegraphics{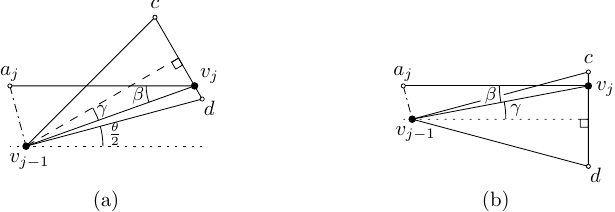}
    \end{center}
    \caption{The remaining cases of the induction step for the \graph{4}: (a) $v_j \in C_k^{v_{j-1}}$ and $|c w| > |d w|$, and (b) $v_j \in C_{k+1}^{v_{j-1}}$}
    \label{fig:SpanningProof4k+4Constrained}
  \end{figure}

  We distinguish two cases: (a) $v_j \in C_k^{v_{j-1}}$ and $|c w| > |d w|$, and (b) $v_j \in C_{k+1}^{v_{j-1}}$. 

  \textbf{Case (a):} When $v_j \in C_k^{v_{j-1}}$ and $|c w| > |d w|$, the induction hypothesis for \canon{v_{j-1}}{v_j} gives $\delta(v_{j-1}, v_j) \leq |v_{j-1} c| + \const \cdot |c v_j|$ (see Figure~\ref{fig:SpanningProof4k+4Constrained}a). We note that $\gamma = \theta - \beta$. Hence Lemma~\ref{lem:CalculationCase} gives that the inequality holds when $\const \geq (\cos (\theta - \beta) - \sin \beta) / (\cos (\theta/2 - \beta) - \sin (3\theta/2 - \beta))$. As this function is decreasing in $\beta$ for $\theta/2 \leq \beta \leq \theta$, it is maximized when $\beta$ equals $\theta/2$. Hence $\const$ needs to be at least $(\cos (\theta/2) - \sin (\theta/2)) / (1 - \sin \theta)$, which can be rewritten to $1 / (\cos (\theta/2) - \sin (\theta/2))$. 

  \textbf{Case (b):} When $v_j \in C_{k+1}^{v_{j-1}}$, $v_j$ lies above the bisector of \canon{v_{j-1}}{v_j} and the induction hypothesis for \canon{v_{j-1}}{v_j} gives $\delta(v_{j-1}, v_j) \leq |v_j d| + \const \cdot |d v_{j-1}|$ (see Figure~\ref{fig:SpanningProof4k+4Constrained}b). We note that $\gamma = \beta$. Hence Lemma~\ref{lem:CalculationCase} gives that the inequality holds when $\const \geq (\cos \beta - \sin \beta) / (\cos (\theta/2 - \beta) - \sin (\theta/2 + \beta))$, which is equal to $1 / (\cos (\theta/2) - \sin (\theta/2))$.
\end{proof}

Since $\cos \alpha / \cos (\theta/2) + (\cos \alpha \tan (\theta/2) + \sin \alpha) / (\cos (\theta/2) - \sin (\theta/2))$ is increasing in $\alpha$, for $\alpha \in [0, \theta/2]$ and fixed $\theta \in [0, \pi/4]$, it is maximized when $\alpha = \theta/2$, and we obtain the following corollary: 

\begin{corollary}
  \label{cor:SpanningRatio4k+4Constrained}
  The constrained \graph{4} is a $\left( 1 + \frac{2 \cdot \sin \left( \frac{\theta}{2} \right)}{\cos \left( \frac{\theta}{2} \right) - \sin \left( \frac{\theta}{2} \right)} \right)$-spanner of $\Vis(P,S)$. 
\end{corollary}

\subsection[The Constrained \Graph{3} and \Graph{5}]{The Constrained $\boldsymbol{\theta_{(4 k + 3)}}$-Graph and $\boldsymbol{\theta_{(4 k + 5)}}$-Graph}
In this section we complete the proof of Theorem~\ref{theo:PathLengthGenericConstrained} for the constrained \graph{3} and \graph{5}. 

\begin{theorem}
  \label{theo:PathLength4k+3Constrained}
  Let $u$ and $w$ be two vertices in the plane such that $u$ can see $w$. Let $m$ be the midpoint of the side of \canon{u}{w} opposite $u$ and let $\alpha$ be the unsigned angle between $u w$ and $u m$. There exists a path connecting $u$ and $w$ in the constrained \graph{3} of length at most 
  \[\left( \frac{\cos \alpha}{\cos \left(\frac{\theta}{2}\right)} + \frac{\left( \cos \alpha \cdot \tan \left(\frac{\theta}{2}\right) + \sin \alpha \right) \cdot \cos \left(\frac{\theta}{4}\right)}{\cos \left(\frac{\theta}{2}\right) - \sin \left(\frac{3\theta}{4}\right)} \right) \cdot |u w|.\] 
\end{theorem}
\begin{proof}
  We apply Theorem~\ref{theo:PathLengthGenericConstrained} using $\const = \cos (\theta/4) / (\cos (\theta/2) - \sin (3\theta/4))$. The assumptions made in Theorem~\ref{theo:PathLengthGenericConstrained} still apply. Recall that $c$ and $d$ are the left and right corners of \canon{v_{j-1}}{v_j}, opposite to $v$, and $a_j$ is the intersection of the horizontal line through $v_j$ and the left boundary of $C_0^{v_{j-1}}$. It remains to show that for the Type (i) configurations, we have that $\max\{|v_{j-1} c| + \const \cdot |c v_j|, |v_{j-1} d| + \const \cdot |d v_j|\} \leq |v_{j-1} a_j| + \const \cdot |a_j v_j|$. Let $\beta$ be $\angle a_j v_j v_{j-1}$ and let $\gamma$ be the angle between $v_j v_{j-1}$ and the bisector of \canon{v_{j-1}}{v_j}. 

  \begin{figure}[ht]
    \begin{center}
      \includegraphics{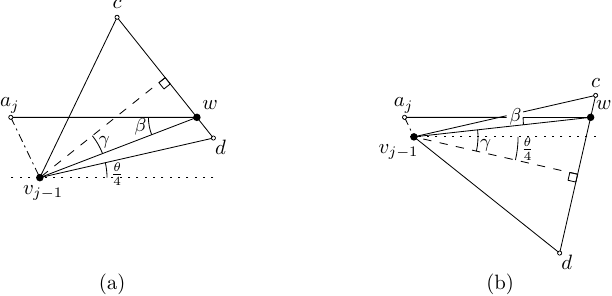}
    \end{center}
    \caption{The remaining cases of the induction step for the \graph{3}: (a) $v_j \in C_k^{v_{j-1}}$ and $|c w| > |d w|$, and (b) $v_j \in C_{k+1}^{v_{j-1}}$}
    \label{fig:SpanningProof4k+3Constrained}
  \end{figure}

  We distinguish two cases: (a) $v_j \in C_k^{v_{j-1}}$ and $|c w| > |d w|$, and (b) $v_j \in C_{k+1}^{v_{j-1}}$. 

  \textbf{Case (a):} When $v_j \in C_k^{v_{j-1}}$ and $|c w| > |d w|$, the induction hypothesis for \canon{v_{j-1}}{v_j} gives $\delta(v_{j-1}, v_j) \leq |v_{j-1} c| + \const \cdot |c v_j|$ (see Figure~\ref{fig:SpanningProof4k+3Constrained}a). We note that $\gamma = 3\theta/4 - \beta$. Hence Lemma~\ref{lem:CalculationCase} gives that the inequality holds when $\const \geq (\cos (3\theta/4 - \beta) - \sin \beta) / (\cos (\theta/2 - \beta) - \sin (5\theta/4 - \beta))$. As this function is decreasing in $\beta$ for $\theta/4 \leq \beta \leq 3\theta/4$, it is maximized when $\beta$ equals $\theta/4$. Hence $\const$ needs to be at least $(\cos (\theta/2) - \sin (\theta/4)) / (\cos (\theta/4) - \sin \theta)$, which is equal to $\cos (\theta/4) / (\cos (\theta/2) - \sin (3\theta/4))$. 

  \textbf{Case (b):} When $v_j \in C_{k+1}^{v_{j-1}}$, $v_j$ lies above the bisector of \canon{v_{j-1}}{v_j} and the induction hypothesis for \canon{v_{j-1}}{v_j} gives $\delta(v_{j-1}, v_j) \leq |v_j d| + \const \cdot |d v_{j-1}|$ (see Figure~\ref{fig:SpanningProof4k+3Constrained}b). We note that $\gamma = \theta/4 + \beta$. Hence Lemma~\ref{lem:CalculationCase} gives that the inequality holds when $\const \geq (\cos (\theta/4 + \beta) - \sin \beta) / (\cos (\theta/2 - \beta) - \sin (3\theta/4 + \beta))$, which is equal to $\cos (\theta/4) / (\cos (\theta/2) - \sin (3\theta/4))$. 
\end{proof}

\begin{theorem}
  \label{theo:PathLength4k+5Constrained}
  Let $u$ and $w$ be two vertices in the plane such that $u$ can see $w$. Let $m$ be the midpoint of the side of \canon{u}{w} opposite $u$ and let $\alpha$ be the unsigned angle between $u w$ and $u m$. There exists a path connecting $u$ and $w$ in the constrained \graph{5} of length at most 
  \[\left( \frac{\cos \alpha}{\cos \left(\frac{\theta}{2}\right)} + \frac{\left( \cos \alpha \cdot \tan \left(\frac{\theta}{2}\right) + \sin \alpha \right) \cdot \cos \left(\frac{\theta}{4}\right)}{\cos \left(\frac{\theta}{2}\right) - \sin \left(\frac{3\theta}{4}\right)} \right) \cdot |u w|.\] 
\end{theorem}
\begin{proof}
  We apply Theorem~\ref{theo:PathLengthGenericConstrained} using $\const = \cos (\theta/4) / (\cos (\theta/2) - \sin (3\theta/4))$. The assumptions made in Theorem~\ref{theo:PathLengthGenericConstrained} still apply. Recall that $c$ and $d$ are the left and right corners of \canon{v_{j-1}}{v_j}, opposite to $v$, and $a_j$ is the intersection of the horizontal line through $v_j$ and the left boundary of $C_0^{v_{j-1}}$. It remains to show that for the Type (i) configurations, we have that $\max\{|v_{j-1} c| + \const \cdot |c v_j|, |v_{j-1} d| + \const \cdot |d v_j|\} \leq |v_{j-1} a_j| + \const \cdot |a_j v_j|$. Let $\beta$ be $\angle a_j v_j v_{j-1}$ and let $\gamma$ be the angle between $v_j v_{j-1}$ and the bisector of \canon{v_{j-1}}{v_j}. 

  \begin{figure}[ht]
    \begin{center}
      \includegraphics{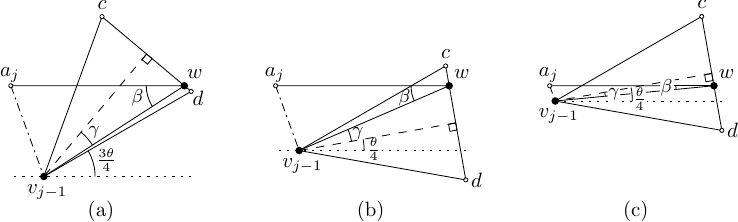}
    \end{center}
    \caption{The remaining cases of the induction step for the \graph{5}: (a) $w$ lies in $C_k^v$ and $|c w| > |d w|$, (b) $w$ lies in $C_{k+1}^v$ and $|c w| < |d w|$, and (c) $w$ lies in $C_{k+1}^v$ and $|c w| \geq |d w|$}
    \label{fig:SpanningProof4k+5Constrained}
  \end{figure}

  We distinguish two cases: (a) $v_j \in C_k^{v_{j-1}}$ and $|c w| > |d w|$, and (b) $v_j \in C_{k+1}^{v_{j-1}}$. 

  \textbf{Case (a):} When $v_j \in C_k^{v_{j-1}}$ and $|c w| > |d w|$, the induction hypothesis for \canon{v_{j-1}}{v_j} gives $\delta(v_{j-1}, v_j) \leq |v_{j-1} c| + \const \cdot |c v_j|$ (see Figure~\ref{fig:SpanningProof4k+5Constrained}a). We note that $\gamma = 5\theta/4 - \beta$. Hence Lemma~\ref{lem:CalculationCase} gives that the inequality holds when $\const \geq (\cos (5\theta/4 - \beta) - \sin \beta) / (\cos (\theta/2 - \beta) - \sin (7\theta/4 - \beta))$. As this function is decreasing in $\beta$ for $3\theta/4 \leq \beta \leq 5\theta/4$, it is maximized when $\beta$ equals $3\theta/4$. Hence $\const$ needs to be at least $(\cos (\theta/2) - \sin (3\theta/4)) / (\cos (\theta/4) - \sin \theta)$, which is less than $\cos (\theta/4) / (\cos (\theta/2) - \sin (3\theta/4))$. 

  \textbf{Case (b):} When $v_j \in C_{k+1}^{v_{j-1}}$, the induction hypothesis for \canon{v}{w} gives $\delta(v_{j-1}, v_j) \leq \max\{|v_{j-1} c| + \const \cdot |c v_j|, |v_{j-1} d| + \const \cdot |d v_j|\}$. If $|c w| < |d w|$ (see Figure~\ref{fig:SpanningProof4k+5Constrained}b), we note that $\gamma = \beta - \theta/4$. Hence Lemma~\ref{lem:CalculationCase} gives that the inequality holds when $\const \geq (\cos (\beta - \theta/4) - \sin \beta) / (\cos (\theta/2 - \beta) - \sin (\theta/4 + \beta))$, which is equal to $\cos (\theta/4) / (\cos (\theta/2) - \sin (3\theta/4))$.

  If $|d w| < |c w|$ (see Figure~\ref{fig:SpanningProof4k+5Constrained}c), we note that $\gamma = \theta/4 - \beta$. Hence Lemma~\ref{lem:CalculationCase} gives that the inequality holds when $\const \geq (\cos (\theta/4 - \beta) - \sin \beta) / (\cos (\theta/2 - \beta) - \sin (3\theta/4 - \beta))$. As this function is decreasing in $\beta$ for $0 \leq \beta \leq \theta/4$, it is maximized when $\beta$ equals $0$. Hence $\const$ needs to be at least $\cos (\theta/4) / (\cos (\theta/2) - \sin (3\theta/4))$.
\end{proof}

When looking at two vertices $u$ and $w$ in the constrained \graph{3} and \graph{5}, we notice that when the angle between $u w$ and the bisector of \canon{u}{w} is $\alpha$, the angle between $w u$ and the bisector of \canon{w}{u} is $\theta/2 - \alpha$. Hence the worst case spanning ratio becomes the minimum of the spanning ratio when looking at \canon{u}{w} and the spanning ratio when looking at \canon{w}{u}. 

\begin{theorem}
  \label{theo:SpanningRatio4k+3,5Constrained}
  The constrained \graph{3} and \graph{5} are $\frac{\cos \left(\frac{\theta}{4}\right)}{\cos \left(\frac{\theta}{2}\right) - \sin \left(\frac{3\theta}{4}\right)}$-spanners of $\Vis(P,S)$. 
\end{theorem}
\begin{proof}
  The spanning ratio of the constrained \graph{3} and \graph{5} is at most: 
  \[ \min \left\{
  \begin{array}{l}
    \frac{\cos \alpha}{\cos \left(\frac{\theta}{2}\right)} + \frac{\left( \cos \alpha \cdot \tan \left(\frac{\theta}{2}\right) + \sin \alpha \right) \cdot \cos \left(\frac{\theta}{4}\right)}{\cos \left(\frac{\theta}{2}\right) - \sin \left(\frac{3\theta}{4}\right)}, \\
    \frac{\cos \left(\frac{\theta}{2} - \alpha\right)}{\cos \left(\frac{\theta}{2}\right)} + \frac{\left( \cos \left(\frac{\theta}{2} - \alpha\right) \cdot \tan \left(\frac{\theta}{2}\right) + \sin \left(\frac{\theta}{2} - \alpha\right) \right) \cdot \cos \left(\frac{\theta}{4}\right)}{\cos \left(\frac{\theta}{2}\right) - \sin \left(\frac{3\theta}{4}\right)}
  \end{array}
  \right\}
  \]

  Since $\cos \alpha / \cos \left( \frac{\theta}{2} \right) + \const \cdot \left( \cos \alpha \cdot \tan \left( \frac{\theta}{2} \right) + \sin \alpha \right)$ is increasing in $\alpha$, for $\alpha \in [0, \theta/2]$ and fixed $\theta \in [0, 2\pi/7]$, the minimum of these two functions is maximized when the two functions are equal, i.e. when $\alpha = \theta/4$. Thus the constrained \graph{3} and \graph{5} have spanning ratio at most: \[\frac{\cos \left(\frac{\theta}{4}\right)}{\cos \left(\frac{\theta}{2}\right)} + \frac{\left( \cos \left(\frac{\theta}{4}\right) \cdot \tan \left(\frac{\theta}{2}\right) + \sin \left(\frac{\theta}{4}\right) \right) \cdot \cos \left(\frac{\theta}{4}\right)}{\cos \left(\frac{\theta}{2}\right) - \sin \left(\frac{3\theta}{4}\right)} = \frac{\cos \left(\frac{\theta}{4}\right) \cdot \cos \left(\frac{\theta}{2}\right)}{\cos \left(\frac{\theta}{2}\right) \cdot \left( \cos \left(\frac{\theta}{2}\right) - \sin \left(\frac{3\theta}{4}\right) \right)}.\] 

\end{proof}

\section{Constrained Yao-Graphs}
In this section, we prove that constrained Yao-graphs with at least 7 cones are spanners of the visibility graph. 

\begin{theorem}
  \label{theo:SpanningRatio}
  The \ygraph ($m \geq 7$) is a $1 / \left( 1 - 2 \sin \left( \frac{\theta}{2} \right) \right)$-spanner of $\Vis(P,S)$. 
\end{theorem}
\begin{proof}
  Let $u$ and $w$ be two vertices that can see each other. We show that there exists a path connecting $u$ and $w$ in the \ygraph ($m \geq 7$) of length at most $t \cdot |u w|$ for $t = 1/(1 - 2 \sin (\theta/2))$, by induction on the rank of the distance between every pair of vertices $u$ and $w$ that can see each other. For ease of exposition, we assume without loss of generality that $w \in C_0^u$. 

  \textbf{Base case:} Vertices $u$ and $w$ are a closest visible pair. Since the closest visible pair need not be unique, we proceed to show that the subcone of $C_0^u$ that contains $w$ does not contain any vertices visible to $u$ at distance at most $|u w|$: If there were such a vertex $x$, since $u x$ and $x w$ are visibility edges that lie in the same subcone, by Lemma~\ref{lem:ConvexChain} there exists a convex chain of visibility edges connecting $x$ to $w$. Since we have at least 7 cones, the vertex adjacent to $w$ along this chain is strictly closer to $w$ than $u$, contradicting that $|u w|$ is a closest visible pair. Hence, since $w$ is the closest visible vertex, $u w$ is an edge in the \ygraph and thus there exists a path between $u$ and $w$ of length $|u w| < t \cdot |u w|$. 

  \textbf{Induction step:} We assume that the induction hypothesis holds for all pairs of vertices that can see each other and whose distance is less than $|u w|$. 

  If $u w$ is an edge in the \ygraph, the induction hypothesis follows by the same argument as in the base case. If there is no edge between $u$ and $w$, let $v$ be the closest visible vertex to $u$ in the subcone of $u$ that contains $w$, and let $x$ be the point along $u w$ such that $|u v| = |u x|$ (see Figure~\ref{fig:ConvexChainYao}). Since $x$ lies on $u w$, both $u x$ and $x w$ are visibility edges. 

  \begin{figure}[ht]
    \begin{center}
      \includegraphics{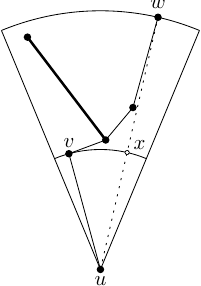}
    \end{center}
    \caption{A convex chain from $v$ to $w$}
    \label{fig:ConvexChainYao}
  \end{figure}

  Next, we show that $v x$ is also a visibility edge: If $v x$ is not a visibility edge, that implies that it crosses some constraint. Since $u v$ and $u x$ are visibility edges, this constraint cannot cross them. Therefore, one endpoint of the constraint is contained in triangle $u v x$. Let $y$ be this endpoint. Since $v$ and $w$ lie in the same subcone of $u$, $u$ is not the endpoint of a constraint intersecting the interior of $u v x$. Hence, we can apply Lemma~\ref{lem:ConvexChain} and obtain a convex chain of visibility edges from $v$ and $x$ and the polygon defined by $u v$, $u x$, and the convex chain is empty and does not contain any constraints. This implies that $u$ can see every vertex along the convex chain, each of which is closer to it than $v$, contradicting that $v$ was the closest visible vertex to $u$. 

  Since $v x$ and $x w$ are visibility edges, we can apply Lemma~\ref{lem:ConvexChain} to triangle $v x w$ and we obtain a convex chain of visibility edges $v = p_0, ..., p_j = w$ connecting $v$ and $w$ (see Figure~\ref{fig:ConvexChainYao}). Since we have at least 7 cones, the distance between any two consecutive vertices is strictly less than $|u w|$. Hence, since every pair of consecutive vertices along this convex chain can see each other, we can apply induction on each of them. Therefore, there exists a path from $u$ to $w$ via $v$ of length at most \[|u v| + t \cdot \sum_{i=0}^{j-1} |p_i p_{i+1}|.\]

  Since the chain between $v$ and $w$ is contained in triangle $v x w$ and the chain is convex, it follows that the total length of the chain is at most $|v x| + |x w|$. Thus, we can upper bound the length of the path by \[|u v| + t \cdot \left( |v x| + |x w| \right).\]

  Since $|u v| = |u x|$, triangle $u v x$ is an isosceles triangle and we can express $|v x|$ as $2 \sin \left( \angle v u x/2 \right) \cdot |u v|$. Since this function is increasing in $\angle v u x$, for $\angle v u x \in [0, 2\pi/7]$ and $\angle v u x \in [0, \theta]$, it follows that $|v x| \leq 2 \sin \left( \theta/2 \right) \cdot |u v|$. Next, we look at $|x w|$: Since $x$ lies on $u w$ and $|u v| = |u x|$, it follows that $|x w| = |u w| - |u x| = |u w| - |u v|$. Hence, the path between $u$ and $w$ has length at most
  \begin{align*}
    & ~~|u v| + t \cdot \left( |v x| + |x w| \right) \\
    \leq & ~~|u v| + t \cdot \left( 2 \sin \left( \frac{\theta}{2} \right) \cdot |u v| + |u w| - |u v| \right) \\
    = & ~~t \cdot |u w| + \left( 1 + 2 \sin \left( \frac{\theta}{2} \right) \cdot t - t \right) \cdot |u v|.
  \end{align*}

  Hence, for the length of the path to be at most $t \cdot |u w|$, we need that \[1 + 2 \sin \left( \frac{\theta}{2} \right) \cdot t - t \leq 0,\] which can be rewritten to \[t \geq \frac{1}{1 - 2 \sin \left( \frac{\theta}{2} \right)},\] completing the proof. 
\end{proof}

For odd values of $m$, the spanning ratio can be decreased a bit: Let $C_i^u$ be the cone of $u$ that contains $w$ and let $C_j^w$ be the cone of $w$ that contains $u$. When we look at two vertices $u$ and $w$ in the \ygraph, we notice that when the angle between $u w$ and the bisector of $C_i^u$ is $\alpha$, the angle between $w u$ and the bisector of $C_j^w$ is $\theta/2 - \alpha$ (see Figure~\ref{fig:OddYao}). Hence, when bounding the worst case spanning ratio of \ygraph{s} with an odd number of cones, we can assume without loss of generality that the angle between the bisector of the cone and $u w$ is at most $\theta/4$. 

\begin{figure}[ht]
  \begin{center}
    \includegraphics{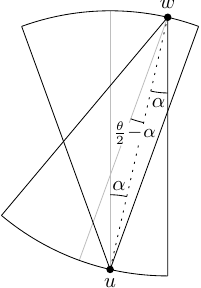}
  \end{center}
  \caption{The angle between $u w$ and the bisector of $C_i^u$ is $\alpha$ and the angle between $w u$ and the bisector of $C_j^w$ is $\theta/2 - \alpha$}
  \label{fig:OddYao}
\end{figure}

Because of this property, we can also extend the theorem to the $Y_5$-graph. This is surprising, since the proof of Theorem~\ref{theo:SpanningRatio} cannot be applied easily to Yao-graphs with fewer than 7 cones. 

\begin{theorem}
  \label{theo:SpanningRatioOdd}
  For odd values of $m \geq 5$, the \ygraph is a $1 / \left( 1 - 2 \sin \left( \frac{3\theta}{8} \right) \right)$-spanner of $\Vis(P,S)$. 
\end{theorem}
\begin{proof}
  Let $u$ and $w$ be two vertices that can see each other. We show that there exists a path connecting $u$ and $w$ in the \ygraph ($m \geq 5$) of length at most $t \cdot |u w|$ for $t = 1/(1 - 2 \sin (3\theta/8))$, by induction on the rank of the distance between every pair of vertices $u$ and $w$ that can see each other. For ease of exposition, we assume without loss of generality that $w \in C_0^u$. We also assume without loss of generality that the angle between the bisector of $C_0^u$ and $u w$ is at most $\theta/4$. 

  \textbf{Base case:} Vertices $u$ and $w$ are a closest visible pair. Using the same argument as in Theorem~\ref{theo:SpanningRatio}, it follows that $u w$ is an edge of the \ygraph and thus there exists a path between $u$ and $w$ of length $|u w| < t \cdot |u w|$. 

  \textbf{Induction step:} We assume that the induction hypothesis holds for all pairs of vertices that can see each other and whose distance is less than $|u w|$. 

  If $u w$ is an edge in the \ygraph, the induction hypothesis follows by the same argument as in the base case. If there is no edge between $u$ and $w$, let $v$ be the closest visible vertex to $u$ in the subcone of $u$ that contains $w$, and let $x$ be the point along $u w$ such that $|u v| = |u x|$ (see Figure~\ref{fig:ConvexChainOddYao}). Since $x$ lies on $u w$, both $u x$ and $x w$ are visibility edges.  

  \begin{figure}[ht]
    \begin{center}
      \includegraphics{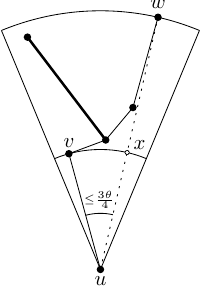}
    \end{center}
    \caption{A convex chain from $v$ to $w$}
    \label{fig:ConvexChainOddYao}
  \end{figure}

  Using the same argument as in Theorem~\ref{theo:SpanningRatio}, it follows that $v x$ is also a visibility edge. Hence, we can apply Lemma~\ref{lem:ConvexChain} to triangle $v x w$ and we obtain a convex chain of visibility edges $v = p_0, ..., p_j = w$ connecting $v$ and $w$ (see Figure~\ref{fig:ConvexChainOddYao}). Since we have at least 5 cones and the angle between the bisector of $C_0^u$ and $u w$ is at most $\theta/4$, the distance between any two consecutive vertices is strictly less than $|u w|$. Hence, since every pair of consecutive vertices along this convex chain can see each other, we can apply induction on each of them. Therefore, there exists a path from $u$ to $w$ via $v$ of length at most \[|u v| + t \cdot \sum_{i=0}^{j-1} |p_i p_{i+1}|.\] Analogous to Theorem~\ref{theo:SpanningRatio}, this expression can be upper bounded by $|u v| + t \cdot \left( |v x| + |x w| \right)$.

  Since $|u v| = |u x|$, triangle $u v x$ is an isosceles triangle and we can express $|v x|$ as $2 \sin \left( \angle v u x/2 \right) \cdot |u v|$. Since this function is increasing in $\angle v u x$, for $\angle v u x \in [0, 3\theta/4]$  and fixed $\theta \in [0, 2\pi/5]$, it follows that $|v x| \leq 2 \sin \left( 3\theta/8 \right) \cdot |u v|$. Analogous to Theorem~\ref{theo:SpanningRatio}, it holds that $|x w| = |u w| - |u v|$. Hence, the path between $u$ and $w$ has length at most
  \begin{align*}
    & ~~|u v| + t \cdot \left( |v x| + |x w| \right) \\
    \leq & ~~|u v| + t \cdot \left( 2 \sin \left( \frac{3\theta}{8} \right) \cdot |u v| + |u w| - |u v| \right) \\
    = & ~~t \cdot |u w| + \left( 1 + 2 \sin \left( \frac{3\theta}{8} \right) \cdot t - t \right) \cdot |u v|.
  \end{align*}

  Hence, for the length of the path to be at most $t \cdot |u w|$, we need that \[1 + 2 \sin \left( \frac{3\theta}{8} \right) \cdot t - t \leq 0,\] which can be rewritten to \[t \geq \frac{1}{1 - 2 \sin \left( \frac{3\theta}{8} \right)},\] completing the proof. 
\end{proof}

\section{Conclusion}
We showed that the constrained \graph{2} has a tight spanning ratio of $1 + 2 \sin(\theta/2)$. This is the first time tight spanning ratios have been found for a large family of constrained $\theta$-graphs. Previously, the only constrained $\theta$-graph for which tight bounds were known was the constrained $\theta_6$-graph. We also gave improved upper bounds on the spanning ratio of the constrained \graph{3}, the constrained \graph{4}, and the constrained \graph{5}. 

There remain a number of open problems, such as finding tight spanning ratios for the constrained \graph{3}, the constrained \graph{4}, and the constrained \graph{5}. Another set of open problems concerns constrained $\theta$-graphs with few cones. In the unconstrained setting, it is known that the $\theta_4$-graph and the $\theta_5$-graph are spanners, but this question remains unanswered in the constrained setting. 

We also looked at constrained Yao-graphs and showed that constrained Yao-graphs with 5 or at least 7 cones are spanners of the visibility graph. Furthermore, the upper bounds on the spanning ratio we obtained match those of the unconstrained Yao-graphs. However, since these bounds are not known to be tight, this raises a number of new questions, the obvious one being whether we can reduce the upper bounds or find matching lower bound constructions. 

Another set of open problems involves constrained Yao-graphs with 4 or 6 cones. In the unconstrained setting, it is known that the $Y_m$-graph is a spanner if and only if $m \geq 4$. Since the proof presented in this paper can be applied only to Yao-graphs with 5 or at least 7 cones, it remains unknown whether this is also true in the constrained setting. 

Finally, though we have upper bounds on the spanning ratio of $\theta$-graphs and Yao-graphs in the constrained setting, we do not have a local competitive routing algorithm to actually route messages between any two visible vertices. The main difficulty stems from the inductive steps along the convex chain, since these steps make it unclear where the routing algorithm should forward the message to. In particular, we cannot assume that there exists an edge in the subcone that contains the destination, since visibility may be blocked by a constraint. Hence, routing remains a major open problem in this area.

\bibliographystyle{plain}
\bibliography{references}
\end{document}